\newtheorem{Th}{Theorem}
\begin{document}

\title{\LARGE Opportunistic Jamming for Enhancing
Security: Stochastic Geometry
Modeling and Analysis}

\author{\IEEEauthorblockN{Chao Wang and Hui-Ming Wang,~\IEEEmembership{Senior Member, IEEE}}
\thanks{``Copyright (c) 2015 IEEE. Personal use of this material is permitted. However, permission to use this material for any other purposes must be obtained from the IEEE by sending a request to pubs-permissions@ieee.org.'' The authors are with the School of Electronic and Information Engineering,
Xi'an Jiaotong University, Xi'an, 710049, Shaanxi, China. Email:
{\tt wangchaoxuzhou@stu.xjtu.edu.cn} and {\tt xjbswhm@gmail.com}.
The contact author is Hui-Ming Wang. The work was supported by the Foundation for the Author of National Excellent Doctoral Dissertation of China under Grant 201340, the National High-Tech Research and Development Program of China under Grant No.
2015AA011306, the New Century Excellent Talents Support Fund of China under Grant NCET-13-0458, the Fok Ying Tong Education Foundation under Grant 141063, and the Fundamental Research Funds for the Central University under Grant No. 2013jdgz11.
}
}

\markboth{IEEE Transactions on Vehicular Technology,~Vol.~XX, No.~XX, XXX~2016}
{}

\maketitle

\begin{abstract}
This correspondence studies the secrecy communication of the single-input single-output  multi-eavesdropper (SISOME) channel with multiple single-antenna jammers, where the jammers and  eavesdroppers are distributed according
to the independent two-dimensional homogeneous Poisson
point process (PPP). For enhancing the physical layer security, we propose an opportunistic multiple jammer selection scheme, where the jammers whose channel gains to the legitimate receiver less than a threshold, are selected to transmit independent and identically distributed (\emph{i.i.d.}) Gaussian jamming signals to confound the eavesdroppers.
We characterize the secrecy throughput achieved by our proposed jammer selection scheme, and  show that the secrecy throughput is a quasi-concave  function of the selection threshold.
\end{abstract}

\begin{IEEEkeywords}
Secrecy communication, SISOME, Poisson
point process, jammer selection, secrecy throughput.
\end{IEEEkeywords}

\IEEEpeerreviewmaketitle

\section{Introduction}
Artificial jamming scheme introduced in \cite{Goel:TWC08} has been recognized as an active approach for improving the physical layer security.
For improving the security of rely networks, cooperative jamming has been introduced in \cite{Dong,ZhengGan}, where multiple single-antenna jammers transmit  jamming signals collaboratively to confound the eavesdropper.
In \cite{Profwangadd4}, a hybrid opportunistic relaying and jamming scheme has been proposed. In \cite{DoctorWang1}, exploring the heterogeneous large-scale fading, a distributed jamming scheme is proposed for securing the single-input multi-output  transmission.
In \cite{CjZhou}, \cite{DoctorWangTWC2}, cooperative jamming has been applied in random networks under the framwork of stochastic geometry.
A survey of the recent advances on cooperative jamming for enhancing security can be found in \cite{Profwangadd5}.



Although cooperative jamming can secure the single antenna communication efficiently, for its implementation, the beamformer weights multiplied at each jammer should be designed coordinately, which will result in a high overhead, especially when these jammers are spatially separated in a random network.  Besides, a common jamming signal vector should be shared among multiple jammers in the cooperative jamming, and the shared jamming
signals should be secured against  eavesdroppers, which would increase the design complexity further.
To reduce the system complexity, a fully distributed jamming scheme without centralized design is more preferable in a practical system.
In \cite{DoctorWangTWC2}, we  proposed a distributed opportunistic jammer selection scheme to secure a single-input multiple-output (SIMO) transmission, where each cooperation node takes its channel direction information (CDI) of the legitimate channel as the jammer selection criteria, and transmits independent jamming signals.

In this correspondence, we propose an opportunistic jammer selection scheme for securing the single-input single-output  multi-eavesdropper (SISOME) wiretap channel in a random network, where positions of multiple jammers and eavesdroppers as two independent Poisson point processes (PPPs).
In particular, multiple single-antenna jammers whose channel gains to the legitimate receiver are less than a fixed selection threshold, are selected to transmit \emph{i.i.d.} jamming signals. We characterize its achievable secrecy throughput and obtain the global optimal selection threshold.
Different from cooperative jamming schemes proposed in \cite{Dong, Profwangadd4} which require multiple jammers to transmit jamming signals collaboratively and share jamming signals, our proposed opportunistic jammer selection scheme makes it possible that multiple jammers can send jamming signals independently and do not need to share jamming signals.
The proposed scheme is also different from the uncoordinated one in \cite{DoctorWangTWC2} since we
take the channel gain information (CGI) of jammers into consideration rather than the CDI, so that it
can be applied for securing the SISOME wiretap channel.
%

 Specially, our contributions can be summarized as follows:

1) Under a stochastic geometry framework, an efficient multiple jammer selection scheme is proposed for securing the SISOME wiretap channel, where each jammer transmits jamming signals in an uncoordinated way so that a very low system overhead is consumed.

2) Compact analysis results of the achievable connection outage and the secrecy outage are derived, which facilitates the numerical evaluation of the  secrecy throughput achieved by our proposed scheme.

3)  We prove that the secrecy throughput is a quasi-concave function of the selection threshold, which can be calculated numerically with a low complexity.

\emph{Notation:} $\mathbf{x} \sim \mathcal{CN}\left(\mathbf{\Lambda}, \mathbf{\Delta}\right)$ denotes the circular symmetric complex Gaussian vector with mean vector $\mathbf{\Lambda}$ and covariance matrix $\mathbf{\Delta}$, $y \sim \textrm{exp}(1)$ denotes the exponential random variable with the parameter 1, $\textrm{ln}(\cdot)$ denotes the base-e logarithm function,
$\Gamma(x)$ is the gamma function \cite[eq. (8.310)]{Table}, $\gamma(x,y)$ is the lower incomplete gamma function \cite[8.350.1]{Table}, $\textrm{Ei}(x)$ is the exponential integral function \cite[8.211.1]{Table}, $\mathbb{E}(\cdot)$ denotes the statistical expectation.

\section{System Model and Assumptions}
\subsection{System model}
We consider a wiretap channel consisting of a legitimate transmitter (Alice), a legitimate receiver (Bob), multiple jammers, and  multiple eavesdroppers (Eves).
All the nodes in the considered system are equipped
with a single antenna\footnote{Due to size, cost, or hardware limitations, in some wireless systems, e.g., ad hoc network and D2D communication scenarios, the nodes can not support multiple antennas. Therefore, our proposed opportunistic jammer selection is suitable for enhancing the physical layer security of these systems.}.
For guaranteeing security, we proposed an opportunistic multiple jammer selection scheme, where multiple jammers are selected in a distributed fashion to cover the secrecy transmission by sending independent artificial jamming signals.

We consider both large-scale and small-scale fading for wireless channels. For the large-scale fading, we adopt the standard path loss model $l(r)=r^{-\alpha}$, where $r$ denotes the distance and $\alpha>2$ is the fading exponent \cite{RandomGraphs}. For the small-scale fading, just as \cite{CjZhou},  we assume independent quasi-static Rayleigh fading, and the channel gains  follow the exponential distribution with the parameter 1.
Since Eves are passive, their instantaneous CSIs and locations are unavailable. We assume that the locations of jammers and Eves are modeled by two independent homogeneous PPPs on $\mathbb{R}^2$ with the densities $\lambda_J$ and $\lambda_E$, respectively. Such random PPP model is well motivated
by the random and unpredictable locations of eavesdroppers.
We denote the location set of all jammers, selected jammers and eavesdroppers as $\Phi_J$, $\Phi_J^s$  and $\Phi_E$, respectively, and the distance between Alice and Bob as $d$.

When multiple selected jammers are activated to transmit jamming signals independently, the received confidential signals at Bob would be disturbed by jamming signals, and the received signal to interference plus noise ratio (SINR) at Bob can be calculated as
$
\textrm{SINR}_B=\frac{P_Sh_Bd^{-\alpha}}{P_J\sum_{i\in\Phi_J^s}g_{iB}D_{iB}^{-\alpha}+N_0},
$
where $P_S$ is the transmit power of Alice, $P_J$ is the transmit power of each jammer, $h_B\sim\textrm{exp}(1)$ is the channel gains between Alice and Bob, $g_{iB}\sim\textrm{exp}(1)$ and $D_{iB}$ are the channel gain and the distance between the $i$th jammer and Bob, $N_0$ is the noise power received at Bob.

Since the noise power at Eve is typically unknown to Alice, we adopt a conservative approach, as done in \cite{Profwangadd4,DoctorWangTWC2,CjZhou}, to design the secure transmission scheme by assuming that the noise power at Eve is zero.
In such case, the received signal to interference ratio (SIR) at the $j$th eavesdropper can be calculated as
$
\textrm{SIR}_{E_j}=\frac{P_Sh_{E_j}d_{E_j}^{-\alpha}}{P_J\sum_{i\in\Phi_J^s}g_{iE_j}D_{iE_j}^{-\alpha}},
$
where $h_{E_j}\sim\textrm{exp}(1)$ and $d_{E_j}$ are the channel gains and the distance between Alice and the $j$th Eve, $g_{iE_j}\sim\textrm{exp}(1)$ and $D_{iE_j}$ are the channel gain and the distance between the $i$th jammer and the $j$th Eve.

\subsection{Opportunistic Jammer Selection}
For securing the legitimate transmission, an opportunistic jammer selection is performed to maximize the receiving performance difference between the legitimate receiver and eavesdroppers. Since the locations of eavesdroppers are unavailable,
it is difficult to select the jammers according to their locations.
In this correspondence, we perform the jammer selection only according to the channel gains $g_{iB}$, without considering the large-scale fading effects.
In particular,  multiple jammers
whose channel gains to Bob less than a fixed threshold, are selected to transmit \emph{i.i.d.} artificial jamming signals. With such jammer selection scheme, the transmitted jamming signals would confound multiple Eves while keeping Bob as non-intrusive as possible.
Then, the channel gains of the selected jammers to Bob should be in the following set
\begin{align}
\mathcal{R}_J=\left\{g_{iB}|\quad g_{iB}\leq \delta\textrm{ and  } i\in {\Phi}_J\right\},\label{RJ}
\end{align}
where $\delta$ is the selection threshold.
Although a smaller $\delta$ would result in less jamming signals received at Bob, the number of competent jammers also decreases. Therefore, there is a tradeoff between decreasing the harmful interference received at Bob and increasing the power of the jamming signals received at Eves. Therefore, $\delta$ should be optimized to maximize the achievable secrecy performance.

With such jammer selection scheme, the channel gain ${{g}}_{iB}$ should lie in the set $\mathcal{R}_J$ and  we define the probability of ${{g}}_{iB}\in\mathcal{R}_J$ as $\textrm{Prob}_J$, which can be calculated as
$
\textrm{Prob}_J = 1-\textrm{exp}(-\delta).
$

Therefore, the random variable  ${{g}}_{iB}$ given ${{g}}_{iB}\in \mathcal{R}_J$ has the conditional density function
\begin{align}
f_{ {{g}}_{iB}}\left(x| {{g}}_{iB}\in\mathcal{R}_J\right) =\frac{\textrm{exp}(-x)1_{[0,\delta]}(x)}{1-\textrm{exp}(-\delta)} ,
\end{align}
where $1_{[0,\delta]}(x)$ is the indicator function having 1 if  $x\in [0,\delta]$ and 0 otherwise.


According to \cite{RandomGraphs}, we know that the resulting selected jammer set is a thinning of the homogeneous PPP of the intensity $\lambda_J$ with the retention probability $\textrm{Prob}_J$. Then the resulting  selected jammer set is a homogeneous PPP $\Phi^s_J$ of the intensity $\lambda_J^s$ which is given by
$
\lambda_J^s=\left(1-\textrm{exp}\left(-\delta\right)\right)\lambda_J
$.
From $\lambda_J^s$, we can find that $\lambda_J^s$ becomes smaller as $\delta$ decreases.

\subsection{Secrecy Throughput}
In this correspondence, we study the achievable secrecy performance of our proposed opportunistic jammer selection scheme by considering  the outage based secrecy metrics.

In the following, we denote the confidential message rate as $R_s$ and the rate of the transmitted codeword as $R_t$. When the capacity of the channel from Alice to Bob is below the transmission rate $R_t$, Bob can not decode the received message correctly. The probability of this event is defined as \emph{connection outage probability}. When the maximal capacity of the channels from Alice to multiple Eves is above the rate $R_e\triangleq R_t-R_s$, the confidential information can not be perfectly secured against eavesdropping.
The probability of such event is defined as\emph{ secrecy outage probability}.
Under a given connection outage probability $\sigma$ and secrecy outage probability $\epsilon$,  the secrecy throughput $\mu$ is defined as
\begin{align}
\mu \triangleq (1-\sigma)R_s,\label{secrecyThroughputDefinition}
\end{align}
which is suitable for evaluating the secrecy performance of systems with stringent delay
constraints.

\section{Secrecy Throughput Analysis and Optimization}
In this section, we firstly study the achievable secrecy throughput of the proposed jammer selection scheme. Then, we prove that the achievable secrecy throughput is a quasi-concave function of $\delta$, and the optimal $\delta$ can be located efficiently by many numerical methods.

\subsection{Secrecy Throughput Analysis}
Defining the SINR threshold  for the connection outage as $\beta_B$, the corresponding rate threshold is $R_t=\textrm{log}_2\left(1+\beta_B\right)$. The connection outage, $p_{co}$ is defined as $p_{co}\triangleq\textrm{Prob}\left(\textrm{SINR}_B\leq \beta_B\right)$, whose analysis result is given by the following theorem.
 \begin{Th}
 Setting $\rho=\frac{2}{\alpha}$, $p_{co}$ can be calculated as
\begin{align}
&p_{co}=1-\textrm{exp}\left(-\frac{N_0d^{\alpha}\beta_B}{P_s}\right.
\nonumber\\
&\qquad\quad\left.-\lambda_J\pi\gamma\left(\rho+1,\delta\right)\Gamma\left(1-\rho\right)\left(\frac{d^{\alpha}P_J}{P_s}\right)^{\rho}\beta_B^{\rho}\right).\label{pco}
 \end{align}
 \end{Th}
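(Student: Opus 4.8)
The plan is to condition on the aggregate interference at Bob and exploit the fact that the desired channel gain $h_B$ is exponentially distributed, so that $p_{co}$ collapses to one minus a Laplace transform. Writing $I\triangleq\sum_{i\in\Phi_J^s}g_{iB}D_{iB}^{-\alpha}$ for the shot-noise interference, the event $\textrm{SINR}_B\le\beta_B$ is equivalent to $h_B\le \beta_B d^{\alpha}(P_JI+N_0)/P_S$. Since $h_B\sim\textrm{exp}(1)$ is independent of $\Phi_J^s$ and of the marks $\{g_{iB}\}$, I would condition on $I$ and use $\textrm{Prob}(h_B\le t)=1-e^{-t}$ to obtain
\[
p_{co}=1-\exp\!\left(-\tfrac{\beta_B d^{\alpha}N_0}{P_S}\right)\,\mathbb{E}_I\!\left[\exp\!\left(-\tfrac{\beta_B d^{\alpha}P_J}{P_S}\,I\right)\right].
\]
This immediately isolates the noise contribution, which already reproduces the $N_0$ term in the exponent of \eqref{pco}, and leaves the Laplace transform $\mathcal{L}_I(s)$ evaluated at $s=\beta_B d^{\alpha}P_J/P_S$.

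The next step is to compute $\mathcal{L}_I(s)$ via the probability generating functional (PGFL) of the PPP. The subtle point here is that the selected jammers $\Phi_J^s$ form a thinned PPP of intensity $\lambda_J^s=(1-e^{-\delta})\lambda_J$, and the very gain $g_{iB}$ entering $I$ is the one used for selection, so it is distributed according to the truncated density $e^{-x}1_{[0,\delta]}(x)/(1-e^{-\delta})$ rather than a plain $\textrm{exp}(1)$. Treating the $g_{iB}$ as i.i.d.\ marks on the points of $\Phi_J^s$ and converting the spatial integral to polar coordinates, the PGFL yields
\[
\mathcal{L}_I(s)=\exp\!\left(-2\pi\lambda_J^s\!\int_0^{\infty}\!\!\int_0^{\delta}\bigl(1-e^{-sxr^{-\alpha}}\bigr)\frac{e^{-x}}{1-e^{-\delta}}\,dx\,r\,dr\right).
\]
The key simplification is the cancellation $\lambda_J^s\cdot e^{-x}/(1-e^{-\delta})=\lambda_J e^{-x}$: the truncation normalization exactly absorbs the thinning factor, and the exponent reduces to $2\pi\lambda_J\int_0^{\delta}e^{-x}\bigl[\int_0^{\infty}(1-e^{-sxr^{-\alpha}})\,r\,dr\bigr]dx$.

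It then remains to evaluate the two integrals. For each fixed $x$, the inner radial integral is the classical shot-noise integral; the substitution $v=sxr^{-\alpha}$ turns it into $\frac{1}{\alpha}(sx)^{\rho}\int_0^{\infty}(1-e^{-v})v^{-\rho-1}\,dv$, and one integration by parts (legitimate because $0<\rho=2/\alpha<1$) evaluates the remaining integral to $\Gamma(1-\rho)/\rho$. Using $\alpha\rho=2$, the inner integral becomes $\tfrac12\Gamma(1-\rho)(sx)^{\rho}$, after which the outer integral contributes $\int_0^{\delta}x^{\rho}e^{-x}\,dx=\gamma(\rho+1,\delta)$. Substituting $s^{\rho}=\beta_B^{\rho}(d^{\alpha}P_J/P_S)^{\rho}$ reproduces precisely the second term in the exponent of \eqref{pco}.

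I expect the only genuinely delicate point to be the bookkeeping in the second paragraph: one must recognize that selection and interference are driven by the \emph{same} random variable $g_{iB}$, so that the marks carry the truncated law while the ground process is simultaneously thinned, and then notice the cancellation that renders the expression independent of the normalization $1-e^{-\delta}$. The remaining pieces—the PGFL, the polar-coordinate reduction, and the gamma integral—are standard; I would only take care to justify the interchange of integration order and the convergence of the radial integral at $r\to 0$ and $r\to\infty$, which holds exactly because $\alpha>2$ forces $0<\rho<1$.
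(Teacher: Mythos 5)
Your proposal is correct and follows essentially the same route as the paper: condition on the shot-noise interference, use the exponential law of $h_B$ to reduce $p_{co}$ to one minus a Laplace transform, and evaluate that transform via the PGFL with the truncated mark distribution, exploiting the cancellation between the thinning factor $1-e^{-\delta}$ and the truncation normalization. The only difference is that the paper invokes the known shot-noise Laplace transform formula (eq.\ (8) of its stochastic-geometry reference) together with $\mathbb{E}(g_{iB}^{\rho})=\gamma(\rho+1,\delta)/(1-e^{-\delta})$, whereas you re-derive that formula from the PGFL and the radial integral explicitly — a more self-contained but mathematically equivalent presentation.
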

\begin{proof}
\begin{align}
p_{co}=&\textrm{Prob}\left(h_B\leq \frac{\beta_B\left(N_0+P_J\sum_{i\in\Phi_J^s}D_{iB}^{-\alpha}g_{iB}\right)d^{\alpha}}{P_s}\right)
\nonumber\\
&=1-\textrm{exp}\left(-\frac{\beta_BN_0d^{\alpha}}{P_S}\right)
\nonumber\\
&\qquad\quad\mathbb{E}\left(\textrm{exp}\left(-\frac{d^{\alpha}P_J\beta_B}{P_s}\sum_{i\in{\Phi}_J^s}D_{jB}^{\alpha}g_{iB}\right)\right).
\end{align}
Then, employing \cite[eq. (8)]{RandomGraphs}, the expectation $\mathbb{E}\left(\textrm{exp}\left(-\frac{d^{\alpha}P_J\beta_B}{P_s}\sum_{j\in{\Phi}_J^s}D_{jB}^{\alpha}g_{jB}\right)\right)$ can be calculated as
\begin{align}
&\mathbb{E}\left(\textrm{exp}\left(-\frac{d^{\alpha}P_J\beta_B}{P_s}\sum_{j\in{\Phi}_J^s}D_{jB}^{\alpha}g_{jB}\right)\right)
=\nonumber\\
&\textrm{exp}\left(-\lambda_J^s\pi\mathbb{E}\left(g_{jB}^{\rho}\right)\Gamma\left(1-\delta\right)\left(\frac{d^{\alpha}P_J\beta_B}{P_s}\right)^{\rho}\right),
\label{expectation2}
\end{align}
and
\begin{align}
\mathbb{E}\left(g_{iB}^{\rho}\right)=\int^{\delta}_0\frac{x^{\rho}\textrm{exp}(-x)}{1-\textrm{exp}\left(-\delta\right)}dx=\frac{\gamma\left(\rho+1,\delta\right)}{1-\textrm{exp}(-\delta)}.
\label{expectation}
\end{align}
Then substituting (\ref{expectation}) into (\ref{expectation2}), (\ref{pco}) can be derived.
\end{proof}

Assuming that the signal-to-interference ratio (SIR) threshold for the secrecy outage as $\beta_E$, the corresponding rate threshold is $R_e=\textrm{log}_2\left(1+\beta_E\right)$. The secrecy outage is
\begin{align}
p_{so}=\textrm{Prob}\left(\max_{j\in\Phi_E}\textrm{SIR}_{E_j}\geq \beta_E\right),
\end{align}
whose analysis result is given by the following theorem.
\begin{Th}
$p_{so}$ can be calculated as
\begin{align}
&p_{so}=
1-
\nonumber\\
&\textrm{exp}\left(-\frac{\lambda_E}{\left(\frac{P_J\beta_E}{P_S}\right)^{\rho}\lambda_J\left(1-\textrm{exp}(-\delta)\right)\Gamma\left(1+\rho\right)\Gamma\left(1-\rho\right)}\right).\label{secrecyoutageNonoise}
\end{align}
\end{Th}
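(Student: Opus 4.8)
The plan is to rewrite the secrecy outage as a void event and exploit the Poisson structure of $\Phi_E$. Since $p_{so}=\textrm{Prob}\left(\max_{j\in\Phi_E}\textrm{SIR}_{E_j}\geq\beta_E\right)=1-\textrm{Prob}\left(\textrm{SIR}_{E_j}<\beta_E \text{ for every } j\in\Phi_E\right)$, the quantity I must evaluate is the probability that the set of ``successful'' eavesdroppers, namely those achieving $\textrm{SIR}_{E_j}\geq\beta_E$, is empty. Viewing this as a location-dependent thinning of $\Phi_E$ and treating the retained set as Poisson, its void probability equals $\textrm{exp}(-\Lambda)$, where $\Lambda$ is the mean number of successful eavesdroppers. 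It then remains to compute $\Lambda=\lambda_E\int_{\mathbb{R}^2}q(r)\,dx$, with $q(r)$ the probability that an eavesdropper at distance $r$ from Alice attains $\textrm{SIR}_{E}\geq\beta_E$, averaged over both the jammer process $\Phi_J^s$ and all fading gains.

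Next I would compute $q(r)$ for a typical eavesdropper. Writing the event $\textrm{SIR}_E\geq\beta_E$ as $h_E\geq s\sum_{i\in\Phi_J^s}g_{iE}D_{iE}^{-\alpha}$ with $s=\frac{P_J\beta_E}{P_S}r^{\alpha}$ and using $h_E\sim\textrm{exp}(1)$ converts $q(r)$ into the Laplace transform of the aggregate jamming interference at $s$. Since the $g_{iE}\sim\textrm{exp}(1)$ are \emph{i.i.d.} and $\Phi_J^s$ is a homogeneous PPP of intensity $\lambda_J^s=(1-\textrm{exp}(-\delta))\lambda_J$, averaging each gain via $\mathbb{E}[\textrm{exp}(-sg|y|^{-\alpha})]=(1+s|y|^{-\alpha})^{-1}$ and applying the probability generating functional of $\Phi_J^s$ gives
\begin{align}
q(r)=\textrm{exp}\left(-\lambda_J^s\int_{\mathbb{R}^2}\frac{s|y|^{-\alpha}}{1+s|y|^{-\alpha}}\,dy\right).\nonumber
\end{align}
Passing to polar coordinates and substituting $u=|y|^{\alpha}/s$ reduces the inner integral to $\pi\rho\,s^{\rho}\int_0^{\infty}\frac{u^{\rho-1}}{1+u}\,du$; the remaining integral is the Beta integral $\Gamma(\rho)\Gamma(1-\rho)$ by Euler's reflection formula, and with $\rho\Gamma(\rho)=\Gamma(1+\rho)$ together with $\alpha\rho=2$ (so $s^{\rho}=\left(\frac{P_J\beta_E}{P_S}\right)^{\rho}r^{2}$) this collapses to $q(r)=\textrm{exp}(-Ar^2)$, where $A=\lambda_J^s\pi\left(\frac{P_J\beta_E}{P_S}\right)^{\rho}\Gamma(1+\rho)\Gamma(1-\rho)$.

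Finally I would integrate over the eavesdropper locations, $\Lambda=\lambda_E\int_{\mathbb{R}^2}\textrm{exp}(-A|x|^2)\,dx=2\pi\lambda_E\int_0^{\infty}r\,\textrm{exp}(-Ar^2)\,dr=\pi\lambda_E/A$. Substituting $A$ cancels the $\pi$ and, after replacing $\lambda_J^s=(1-\textrm{exp}(-\delta))\lambda_J$, reproduces precisely the exponent appearing in (\ref{secrecyoutageNonoise}), so that $p_{so}=1-\textrm{exp}(-\Lambda)$. The step I expect to be the main obstacle is justifying the void-probability formula itself: the interference fields seen by distinct eavesdroppers share the single jammer process $\Phi_J^s$, so the successful-eavesdropper set is only Poisson after one either conditions on $\Phi_J^s$ before applying the generating functional of $\Phi_E$ and argues the averaging order is admissible, or invokes the standard decoupling approximation that treats the per-eavesdropper events as independent. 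The two nested integrals above are otherwise routine once the Beta/reflection identity is in hand.
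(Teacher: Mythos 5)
Your proposal is correct and follows essentially the same route as the paper: your void-probability/thinning step is exactly the paper's PGFL application to $\Phi_E$, your explicit computation of $q(r)$ merely unpacks the interference Laplace-transform result that the paper cites as \cite[eq. (8)]{RandomGraphs}, and your final polar integration is the paper's closing step. The dependence subtlety you flag (distinct eavesdroppers sharing the common jammer process $\Phi_J^s$) is present but unacknowledged in the paper's very first line, where the expectation over $\Phi_J^s$ is moved inside the product over $\Phi_E$, so both arguments rest on the same decoupling.
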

\begin{proof}
Following the probability generating functional (PGFL) \cite{RandomGraphs}, we have
\begin{align}
&p_{so}=1-\mathbb{E}\left(\prod_{j\in\Phi_E}\left(\textrm{Prob}\left(\textrm{SIR}_{E_j}\leq \beta_E\right)\right)\right)
\nonumber\\
&=1-\textrm{exp}\left(-\lambda_E\int_{\mathbb{R}^2}\left(1-\textrm{Prob}\left(\textrm{SINR}_{E_j}\leq \beta_E\right)d\mathbf{x}_{E_j}\right)\right),\label{PsoDerivation}
\end{align}
where $\mathbf{x}_{E_j}$ denotes the location of the $j$th eavesdropper, and $\textrm{Prob}\left(\textrm{SIR}_{E_j}\leq \beta_E\right)$ denotes the probability that the received SIR at the $j$th eavesdropper is less than $\beta_E$. Assuming that the distance between the $j$th eavesdropper and Alice is $d_{E_j}$,
\begin{align}
&\textrm{Prob}\left(\textrm{SIR}_{E_j}\leq \beta_E\right)
=\nonumber\\
&1-\mathbb{E}\left(\textrm{exp}\left(-\frac{P_Jd^{\alpha}_{E_j}\sum_{i\in\Phi_J^s}g_{iE_j}D_{iE_j}^{-\alpha}}{P_S}\right)\right)
\overset{(m)}{=}1-
\nonumber\\
&\textrm{exp}\left(-\lambda_J\left(1-\textrm{exp}(-\delta)\right)\Gamma(1+\delta)\Gamma(1-\delta)d_{E_j}^{2}\left(\frac{\beta_EP_J}{P_S}\right)^{\rho}\right)
\label{ProbabilityJthEavesdropper}
\end{align}
Step $(m)$ is due to \cite[eq. (8)]{RandomGraphs}.
Then, substituting (\ref{ProbabilityJthEavesdropper}) into (\ref{PsoDerivation}) and changing to a polar coordinate system, we have
\begin{align}
&p_{so}=1-\textrm{exp}\left(-\lambda_E\pi\int^{+\infty}_0\textrm{exp}\left(-\Psi y\right)dy\right),\label{additionalEquation}
\end{align}
where $\Psi\triangleq\left(\frac{P_J\beta_E}{P_S}\right)^{\rho}\lambda_J\left(1-\textrm{exp}(-\delta)\right)\pi\Gamma(1+\rho)\Gamma(1-\rho)$.
After completing the integral, the proof can be completed.

\end{proof}

Then with Theorem 2, the required rate redundancy $R_e$ for maintaining the secrecy outage constraint $p_{so}\leq \epsilon$,
can be calculated by setting $p_{so}=\epsilon$, which is given by
\begin{align}
&R_e =
\textrm{log}_2\Bigg(1+
\nonumber\\
&\left.\left(\frac{\lambda_E}{\left(\frac{P_J}{P_S}\right)^{\rho}\lambda_J\left(1-\textrm{exp}(-\delta)\right)\Gamma(1+\rho)\Gamma(1-\rho)\textrm{ln}\frac{1}{1-\epsilon}}\right)^{\frac{\alpha}{2}}\right).\label{RERE}
\end{align}

Accordingly, the maximal $R_t$ can be calculated from the connection outage constraint $p_{co}\leq \sigma$ in (\ref{pco}). Unfortunately, the closed-form analysis result of the maximal $R_t$ can not be obtained from (\ref{pco}). But since $p_{co}$ is a monotonic increasing function of $R_t$, employing the bisection search, the numeral result of the maximal $R_t$ can be obtained.
Then, according to (\ref{secrecyThroughputDefinition}), the secrecy throughput $\mu$ can be calculated by
\begin{align}
\mu = (R_t-R_e)(1-\sigma)\label{secrecythroughputformula}.
\end{align}

In the following subsection, we would optimize the selection threshold to maximize the achievable secrecy throughput $\mu$.

\subsection{Optimizing $\delta$ for Secrecy Throughput Maximization}
From the discussions above, we can find that there is an optimal tradeoff between protecting Bob from the harmful interference and increasing the jamming power received at each Eve, which is determined by $\delta$. The following theorem shows that the achievable secrecy throughput is a quasi-concave function of $\delta$.
\begin{Th}
$\mu$ is a quasi-concave function $\delta$.
\end{Th}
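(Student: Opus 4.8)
The plan is to use the fact that $\mu=(1-\sigma)\bigl(R_t-R_e\bigr)$ with $1-\sigma>0$ a constant, so a positive scaling reduces the problem to quasi-concavity of $\phi(\delta):=R_t(\delta)-R_e(\delta)$ on $\delta>0$. For a one-dimensional function this is equivalent to unimodality, namely that $\phi$ is non-decreasing up to some $\delta^\star$ and non-increasing afterwards. I would establish this by analyzing the sign of $\phi'$ and showing it changes sign exactly once, from positive to negative.

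First I would treat the two terms separately. The rate $R_e(\delta)$ is explicit in (\ref{RERE}): writing $C$ for the $\delta$-independent constant in the bracket, $R_e=\log_2\!\left(1+\left(C/(1-e^{-\delta})\right)^{\alpha/2}\right)$, which is smooth and strictly decreasing (since $1-e^{-\delta}$ increases), with $R_e\to+\infty$ as $\delta\to0^+$ and a finite limit as $\delta\to\infty$; direct differentiation gives $|R_e'|\sim 1/\delta$ near $0$ and $|R_e'|\sim e^{-\delta}$ for large $\delta$. The rate $R_t(\delta)=\log_2(1+\beta_B)$ is defined only implicitly through $p_{co}=\sigma$, i.e.\ through (\ref{pco}) rewritten as $-\ln(1-\sigma)=a\beta_B+b(\delta)\beta_B^{\rho}$ with $a=N_0d^\alpha/P_s$ and $b(\delta)=\lambda_J\pi\gamma(\rho+1,\delta)\Gamma(1-\rho)(d^\alpha P_J/P_s)^{\rho}$, whose derivative $b'(\delta)=\lambda_J\pi\Gamma(1-\rho)(d^\alpha P_J/P_s)^{\rho}\,\delta^{\rho}e^{-\delta}>0$ comes from differentiating the lower incomplete gamma. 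Implicit differentiation then yields $\beta_B'=-b'\beta_B^{\rho}/(a+\rho b\beta_B^{\rho-1})<0$, so $R_t$ is smooth and strictly decreasing with finite limits at both ends, and one finds $|R_t'|\sim\delta^{\rho}$ near $0$ and $|R_t'|\sim\delta^{\rho}e^{-\delta}$ for large $\delta$.

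Since both terms are decreasing, $\phi'=R_t'-R_e'=|R_e'|-|R_t'|$. The boundary estimates give $\phi'(0^+)>0$, because the blow-up $|R_e'|\sim1/\delta$ dominates $|R_t'|\sim\delta^{\rho}$, and $\phi'(\delta)<0$ for large $\delta$, because the extra algebraic factor makes $|R_t'|\sim\delta^{\rho}e^{-\delta}$ dominate $|R_e'|\sim e^{-\delta}$. Thus $\phi$ increases near $0$ and decreases near $\infty$, and it remains to rule out intermediate oscillation, i.e.\ to show $\phi'$ vanishes only once. I would do this through the ratio $g(\delta):=|R_t'(\delta)|/|R_e'(\delta)|$, for which $\phi'>0\iff g<1$, and prove that $g$ is strictly increasing from $0$ to $+\infty$; the unique root of $g(\delta^\star)=1$ is then the maximizer and the single sign change of $\phi'$ follows, giving quasi-concavity.

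The hard part will be exactly this monotonicity of $g$ (equivalently, forbidding interior local minima of $\phi$). Because $R_t$ is available only implicitly, $|R_t'|$ is a composite of $\beta_B(\delta)$ and $b'(\delta)$ and is itself non-monotone (it vanishes at both ends), so single crossing does not follow from the endpoint behavior alone and must be extracted from the precise forms $b'(\delta)=\text{const}\cdot\delta^{\rho}e^{-\delta}$ and the explicit $|R_e'|$, combined with the sign and monotonicity of $\beta_B$ and $\beta_B'$ coming from the implicit relation. A cleaner fallback for this step is to verify the second-order condition $\phi''(\delta)<0$ at every stationary point $\phi'(\delta)=0$ (obtaining $\beta_B''$ by differentiating the constraint a second time), which immediately rules out interior minima and forces unimodality; I would adopt this route if establishing the monotonicity of $g$ directly turns out to be unwieldy.
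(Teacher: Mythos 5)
Your proposal sets up the right skeleton: the reduction from $\mu$ to $\phi=R_t-R_e$, the implicit differentiation of the constraint $p_{co}=\sigma$ giving $\beta_B'=-b'\beta_B^{\rho}/(a+\rho b\beta_B^{\rho-1})$ (this matches the paper's (\ref{Revisedfirst1})), and the boundary behavior ($\phi'>0$ near $0$, $\phi'<0$ at infinity) are all sound. But the proof has a genuine gap exactly where you flag "the hard part": you never actually rule out interior oscillation of $\phi'$. Your primary route --- proving that $g(\delta)=|R_t'|/|R_e'|$ is \emph{strictly increasing} --- is left entirely unexecuted, and it is a strictly stronger property than what is needed; nothing in your analysis of $b'(\delta)=\mathrm{const}\cdot\delta^{\rho}e^{-\delta}$ and the implicit $\beta_B(\delta)$ suggests it can be established, since $|R_t'|$ is itself non-monotone and $\beta_B$ enters $g$ through a ratio whose monotonicity is opaque. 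Your fallback --- checking $\phi''<0$ at every stationary point --- is precisely the paper's route, but declaring it as a fallback is not carrying it out, and carrying it out is where all the analytic work lies.

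Concretely, the paper's proof of that step is not routine: after computing the lengthy expression (\ref{secondorderExprression}) for $\phi''$, it (i) uses the stationarity condition (\ref{firstorder}) to kill the term $T_1$ \emph{and} to rewrite the denominator of $T_2$ in terms of the Bob-side quantities, (ii) invokes nonnegativity of the secrecy throughput, i.e.\ $\beta_B\geq\beta_E$, to obtain the bound (\ref{addLowerBoundT2}) (step $(e)$), and (iii) uses $\rho<1$ to sign the residual terms $\Delta_1,\Delta_2>0$, concluding $T_2+T_3>0$ and hence $\phi''<0$ at stationary points. Steps (ii) and (iii) --- importing the constraint $\beta_B\geq\beta_E$ and the condition $\rho=2/\alpha<1$ into the second-order analysis --- are the key ideas that make the sign determination possible, and they are absent from your proposal. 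As written, your argument establishes only that $\phi$ increases near $0$ and decreases near $\infty$, which does not imply quasi-concavity; so the statement remains unproven under your plan until one of your two deferred routes is completed, and the only route known to work requires the machinery above.
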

\begin{proof}
For maximizing $\mu$, the maximal $\beta_B$ and minimal $\beta_E$ should make the constraints above active, which leads to $p_{co}=\sigma$ and $p_{so}=\epsilon$.

For notational conciseness, we denote $b\triangleq\frac{N_0d^{\alpha}}{P_S}$, $a =\lambda_J\pi\Gamma\left(1-\rho\right)\left(\frac{d^{\alpha}P_J}{P_S}\right)^{\rho}$, and
$c\triangleq \frac{\lambda_EP_S^{\rho}}{P_J^{\rho}\lambda_J\Gamma\left(1+\rho\right)\Gamma\left(1-\rho\right)}$.
Then, from $p_{co}=\sigma$ and $p_{so}=\epsilon$, according to implicit function theorem, we have
\begin{align}
&\frac{d\beta_B}{d\delta}=-\frac{a\delta^{\rho}e^{-\delta}\beta_B^{\rho}}{b+a\rho\gamma\left(\rho+1,\delta\right)\beta_B^{\rho-1}}
,\label{Revisedfirst1}\\
&\frac{d\beta_E}{d\delta}=-\frac{\beta_E\textrm{exp}(-\delta)}{\rho\left(1-\textrm{exp}(-\delta)\right)}.\label{first2}
\end{align}
Since $\mu=\left(1-\sigma\right)\left(R_t-R_e\right)$, where only $R_t-R_e$ is determined by $\delta$, $\mu$ is a quasi-concave function of $\delta$, if and only if $R_t-R_e$ is a quasi-concave function of $\delta$.

In the following, we show that $R_t-R_e$ satisfies the second-order conditions of the quasi-concave function \cite[Section 3.4.3]{ConvexOptimization}, which is given as follows. $f(x)$ is a quasi-concave function on $\mathbb{R}$, if and only if
$
\frac{df(x)}{dx}=0\Rightarrow\frac{d^2f(x)}{d^2x}\leq 0.
$

When $\frac{d\left(R_t-R_e\right)}{d\delta}=0$, we have
\begin{align}
&\frac{a\delta^{\rho}e^{-\delta}\beta_B^{\rho}}{\left(1+\beta_B\right)\left(b+a\rho\gamma\left(\rho+1,\delta\right)\beta_B^{\rho-1}\right)}
\nonumber\\
&=\frac{\beta_E\textrm{exp}(-\delta)}{\left(1+\beta_E\right)\rho\left(1-\textrm{exp}(-\delta)\right)}.\label{firstorder}
\end{align}
\begin{figure*}[!t]
\begin{align}
&\frac{d^2\left(R_t-R_e\right)}{d^2\delta}=\underset{T_1}{\underbrace{e^{-\delta}\left(\frac{a\delta^{\rho}e^{-\delta}\beta_B^{\rho}}{b+a\rho\gamma\left(\rho+1,\delta\right)\beta_B^{\rho-1}}
-\frac{\beta_E\textrm{exp}(-\delta)}{\rho\left(1-\textrm{exp}(-\delta)\right)}\right)}}
-e^{-\delta}\left(\underset{T_2}{\underbrace{\frac{\beta_Ee^{-\delta}+\beta_E\rho e^{-\delta}+\beta^2_E\rho e^{-\delta}}{\left((1+\beta_E)\rho\left(1-e^{-\delta}\right)\right)^2}}}\right.
\nonumber\\
&\ \ \ \left.\underset{T_3}{\underbrace{+\frac{\Theta+a^2\delta^{2\rho}e^{-\delta}\beta_B^{2\rho}-a^2\rho\delta^{2\rho}e^{-\delta}\beta_B^{2\rho-1}-a^2\rho\delta^{2\rho}e^{-\delta}\beta_B^{2\rho}-a^2\rho e^{-\delta}\delta^{2\rho}\beta_B^{2\rho-1}-a^2\rho e^{-\delta}\delta^{2\rho}\beta_B^{2\rho}}{\left((1+\beta_B)\left(b+a\rho\gamma\left(\rho+1,\delta\right)\beta_B^{\rho-1}\right)\right)^2}
}}\right),\label{secondorderExprression}
\end{align}
\hrulefill
\end{figure*}

Then, in the following, we prove that when $\delta$ satisfies (\ref{firstorder}), the second-order derivative
$
\frac{d^2\left(R_t-R_e\right)}{d^2\delta}\leq 0.
$

From (\ref{Revisedfirst1}) and (\ref{first2}), we first derive the second-order derivative of $R_t-R_e$ with respect to $\delta$ after tedious manipulations, given by (\ref{secondorderExprression}) at the top of this page,
where
\begin{align}
\Theta&=a\rho\delta^{\rho-1}\beta_B^{\rho-1}\left(1+\beta_B\right)\left(\beta_B b+a\rho\gamma(\rho+1,\delta)\beta_B^{\rho}\right)
\nonumber\\
&+\left(1+\beta_B\right)\frac{a^3\rho\gamma(\rho+1,\delta)}{b+a\rho\gamma\left(\rho+1,\delta\right)\beta_B^{\rho-1}}>0.
\nonumber
\end{align}

When $\delta$ satisfies (\ref{firstorder}), $T_1$ in (\ref{secondorderExprression}) is zero. In the following, we concentrate on proving that $T_2+T_3$ in (\ref{secondorderExprression}) is positive.
Before proceeding, we bound $T_2$ with the following procedures.

From (\ref{firstorder}), we have
\begin{align}
&\frac{1}{\left(\left(1+\beta_E\right)\rho\left(1-\textrm{exp}(-\delta)\right)\right)^2}
\nonumber\\
&=\frac{a^2\delta^{2\rho}\beta_B^{2\rho}}{\beta_E^2\left(1+\beta_B\right)^2\left(b+a\rho\gamma\left(\rho+1,\delta\right)\beta_B^{\rho-1}\right)^2}.\nonumber
\end{align}
Then with the equation above, $T_2$ can be bounded as (\ref{addLowerBoundT2}) at the top of the next page,
\begin{figure*}
\begin{align}
T_2&=\frac{a^2\delta^{2\rho}e^{-\delta}\beta_B^{2\rho}+\rho a^2\delta^{2\rho}e^{-\delta}\beta_B^{2\rho}}{\beta_E\left(1+\beta_B\right)^2\left(b+a\rho\gamma\left(\rho+1,\delta\right)\beta_B^{\rho-1}\right)^2}
+
\frac{a^2\rho\delta^{2\rho}e^{-\delta}\beta_B^{2\rho}}{\left(1+\beta_B\right)^2\left(b+a\rho\gamma\left(\rho+1,\delta\right)\beta_B^{\rho-1}\right)^2}
\nonumber\\
&\qquad \overset{(e)}{\geq}\frac{a^2\delta^{2\rho}e^{-\delta}\beta_B^{2\rho-1}+\rho a^2\delta^{2\rho}e^{-\delta}\beta_B^{2\rho-1}}{\left(1+\beta_B\right)^2\left(b+a\rho\gamma\left(\rho+1,\delta\right)\beta_B^{\rho-1}\right)^2}
+
\frac{a^2\rho\delta^{2\rho}e^{-\delta}\beta_B^{2\rho}}{\left(1+\beta_B\right)^2\left(b+a\rho\gamma\left(\rho+1,\delta\right)\beta_B^{\rho-1}\right)^2},\label{addLowerBoundT2}
\end{align}
\hrulefill
\end{figure*}
where step $(e)$ holds since the achievable secrecy throughput is nonnegative.

Then $T_2+T_3$ can be bounded as
\begin{align}
T_2+T_3\geq\frac{\Theta+\Delta_1+\Delta_2
}{\left(1+\beta_B\right)^2\left(b+a\rho\gamma\left(\rho+1,\delta\right)\beta_B^{\rho-1}\right)^2},
\end{align}
where
$\Delta_1\triangleq a^2\delta^{2\rho}e^{-\delta}\beta_B^{2\rho}-a^2\rho\delta^{2\rho}e^{-\delta}\beta_B^{2\rho}$,
$\Delta_2\triangleq
a^2\delta^{2\rho}e^{-\delta}\beta_B^{2\rho-1}-a^2\rho e^{-\delta}\delta^{2\rho}\beta_B^{2\rho-1}$.

Since $\rho<1$, we have $\Delta_1,\Delta_2>0$ and $T_2+T_3>0$. Then we can conclude that when $\delta$ satisfies (\ref{firstorder}), $\frac{d^2\left(R_t-R_e\right)}{d^2\delta}<0$. Therefore  $R_t-R_e$ is a quasi-concave function of $\delta$, and $\mu$ is a quasi-concave function of $\delta$.
\end{proof}


Since $\mu$ is a quasi-concave function of $\delta$, the optimal $\delta$ for maximizing $\mu$ can be located by the bisection search algorithm in \cite{ConvexOptimization}. For implementing the bisection search algorithm, we need
the first-order derivative of $R_t-R_e$ with respect to $\delta $, which can be derived  from (\ref{Revisedfirst1}) and (\ref{first2}). In particular,
\begin{align}
&\frac{d\left(R_t-R_e\right)}{d\delta}=-\frac{a\delta^{\rho}e^{-\delta}(\beta^o_B)^{\rho}}{\left(1+(\beta^o_B)\right)\left(b+a\rho\gamma\left(\rho+1,\delta\right)(\beta^o_B)^{\rho-1}\right)}
\nonumber\\
&+\frac{\beta^o_E\textrm{exp}(-\delta)}{\left(1+\beta^o_E\right)\rho\left(1-\textrm{exp}(-\delta)\right)},\label{FirstDerivateEvaluation}
\end{align}
where $\beta_B^o$ is the maximal $\beta_B$ satisfying $p_{co}=\sigma$, and  $\beta_E^o$ is the minimal $\beta_E$ satisfying $p_{so}=\epsilon$. The closed-form analysis result of $\beta_E^o$ can be derived from (\ref{secrecyoutageNonoise}), which is given by $\beta_E^o=\left(\frac{\lambda_E}{\left(\frac{P_J}{P_S}\right)^{\rho}\lambda_J\left(1-\textrm{exp}(-\delta)\right)\Gamma(1+\rho)\Gamma(1-\rho)\textrm{ln}\frac{1}{1-\epsilon}}\right)^{\frac{\alpha}{2}}$. The numerial result of $\beta_B^o$ can be derived from (\ref{pco}) by numerical methods.
\section{Simulation Results and Discussions}
Setting the distance between Alice and Bob $d=1$ m, the path loss exponent $\alpha=3$, considering different $\lambda_E$ and $\lambda_J$, some representative simulation results are provided to gain more insights into the proposed
jammer selection scheme. In Fig. \ref{secrecythroughputwithlambdaE}, we show the secrecy performance comparison between our proposed opportunistic jammer selection scheme and the random jammer selection scheme in \cite{CjZhou}, where multiple randomly selected jammers transmit jamming signals independently to interfere with eavesdroppers. Without jammer selection, the jamming signals transmitted from multiple randmly selected jammers would deterioriate the receive performance of Bob and Eves simultaneously. Therefore, compared with our proposed jammer selection scheme, the secrecy performance deterioration of the random jammer selection scheme can be anticipated. From Fig. \ref{secrecythroughputwithlambdaE}, we can find that compared with the ramdom jammer selection scheme, the secrecy performance improvement acheived by our proposed opportunistic jammer selection scheme is substantial, which validates the efficiency of our proposed secure transmission scheme.
\begin{figure}[!t]
\centering
\includegraphics[width=2.5in]{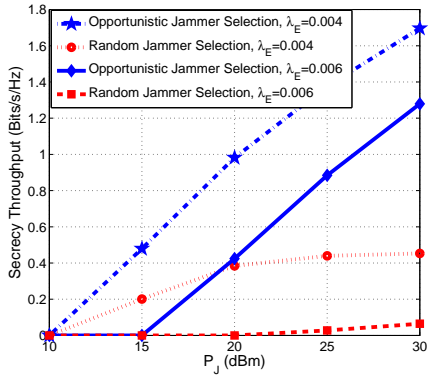}
\caption{Performance comparison between our scheme and random jammer selection proposed in \cite{CjZhou} for $\sigma=0.1,\epsilon=0.01,P_S=20$ dBm.}
\label{secrecythroughputwithlambdaE}
\end{figure}


\begin{figure}[!t]
\centering
\includegraphics[width=2.5in]{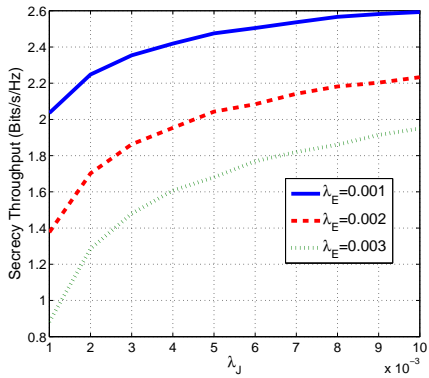}
\caption{Average secrecy throughput achieved by our  scheme vs $\lambda_J$ for $\sigma=0.1,\epsilon=0.01,P_S=20$ dBm, and $P_J=30$ dBm.}
\label{secrecythroughputwithLambdaJ}
\end{figure}

In Fig. \ref{secrecythroughputwithLambdaJ}, we plot the   secrecy throughput versus $\lambda_J$, which shows that the achievable  secrecy throughput increases with the increasing $\lambda_J$, and there is a
diminishing return in the achievable secrecy throughput as $\lambda_J$ increases. This because that although the increasing alternative jammers would increase the interference power received at potential eavesdroppers, the interference received at Bob also increases. Furthermore, although the cost for the secrecy communication decreases with the increasing alternative jammers, the maximal $R_t$ is limited by the legitimate channel. Therefore, when the transmission power at Alice remains unchanged, the improvement of the secrecy throughput by increasing $\lambda_J$ is limited.

\section{Conclusion}
In this correspondence, we proposed an opportunistic jammer selection scheme for improving the security of the SISOME wiretap channel where jammers whose channel gains to Bob less than a threshold, are selected to transmit \emph{i.i.d.}  jamming signals to confound  eavesdroppers.
We characterize the secrecy throughput achieved by the scheme, and prove it is a quasi-concave function of the selection threshold.  Simulation results confirm the efficiency of our proposed secure transmission scheme.

%





\begin{thebibliography}{100}

%
%
\bibitem{Goel:TWC08}
S. Goel and R. Negi, ``Guaranteeing secrecy using artificial noise,''
\emph{IEEE Trans. Wireless Commun.}, vol. 7, no. 6, pp. 2180-2189, Jun.
2008.

%
%
%
%




\bibitem{Dong}
L. Dong, Z. Han, A. P. Petropulu, and H. V. Poor, ``Improving wireless
physical layer security via cooperating relays,'' \emph{IEEE Trans. Signal
Process.}, vol. 58, no. 3, pp. 1875-1888, Mar. 2010.


\bibitem{ZhengGan}
J. Li, A. P. Petropulu, and S. Weber, ``On cooperative relaying schemes
for wireless physical layer security,'' \emph{IEEE Trans. Signal Process.}, vol. 59,
no. 10, pp. 4985-4997, Oct. 2011.


\bibitem{Profwangadd4}
C. Wang, H.-M. Wang, and X.-G. Xia, ``Hybrid opportunistic relaying and jamming with power allocation for secure cooperative networks,'' \emph{IEEE Trans. on Wireless Commun.,} vol. 14, no. 2, pp. 589-605, Feb. 2015.

\bibitem{DoctorWang1}
C. Wang, H.-M. Wang, and B. Wang, ``Low-overhead distributed jamming for SIMO secrecy transmission with statistical CSI,'' \emph{IEEE Signal Process. Lett.,} vol. 22, no. 12, pp. 2294-2298, Dec. 2015.

\bibitem{CjZhou}
X. Zhou, M. Tao, and R. A. Kennedy, ``Cooperative jamming for secrecy
in decentralized wireless networks,'' in \emph{Proc. IEEE Int. Conf.
	Commun.}, Ottawa, Canada, Jun. 2012.

\bibitem{DoctorWangTWC2}
C. Wang, H.-M. Wang, X.-G. Xia, and Chaowen Liu ``Uncoordinated jammer selection for securing
SIMOME wiretap channels: a stochastic geometry approach,''
\emph{IEEE Trans. on Wireless Commun.}, vol. 14, no. 5, pp. 2596-2612, May 2015.


\bibitem{Profwangadd5}
H.-M. Wang and X.-G. Xia, ``Enhancing wireless secrecy via cooperation: signal design and optimization,'' \emph{IEEE Commun. Mag.}, vol.53, no. 12, pp. 47-53, Dec. 2015.























%
%
%









%
%



%
%
%
%
%
%
%

\bibitem{Table}
I. S. Gradshteyn and I. M. Ryzhik, \emph{Table of Integrals, Series, and
Products}, 7th ed. New York: Academic, 2007.

\bibitem{RandomGraphs}
M. Haenggi, J. G. Andrews, F. Baccelli,
O. Dousse, and M. Franceschetti, ``Stochastic geometry and random graphs for the analysis and design of wireless networks,''
\emph{IEEE J. Sel. Areas Commun.}, vol. 27, no. 7, pp. 1029-1046, Sep. 2009.


















%

%






%


\bibitem{ConvexOptimization}
S. Boyd and L. Vandenberghe, \emph{Convex Optimization}. Cambridge
University Press, 2004.













\end{thebibliography}
\end{document}